\newcommand{\caA}{{\mathcal A}}
\newcommand{\caB}{{\mathcal B}}
\newcommand{\caH}{{\mathcal H}}
\newcommand{\bbN}{{\mathbb N}}
\newcommand{\ie}{{\it i.e.\/} }
\newcommand{\iu}{\mathrm{i}}
\newcommand{\str}{^*}
\newcommand{\ep}[1]{\mathrm{e}^{#1}}
\newcommand{\dif}{\mathrm{d}}
\newcommand{\Idif}{\,\mathrm{d}}
 \newtheorem{thm}{Theorem}
\begin{document}

\title{Local factorisation for the dynamics of quantum spin systems} 

\author[S. Bachmann]{Sven Bachmann}
\address{Mathematisches Institut der Universit{\"a}t M{\"u}nchen \\ Theresienstr. 39, 80333 M{\"u}nchen \\ Germany}
\email{sven.bachmann@math.lmu.de}

\author[A. Bluhm]{Andreas Bluhm}
\address{Mathematisches Institut der Universit{\"a}t M{\"u}nchen \\ Theresienstr. 39, 80333 M{\"u}nchen \\ Germany}
\email{andreas.bluhm@campus.lmu.de}

\begin{abstract}
Motivated by the study of area laws for the entanglement entropy of gapped ground states of quantum spin systems and their stability, we prove that the unitary cocycle generated by a local time-dependent Hamiltonian can be approximated, for any finite set $X$, by a tensor product of the corresponding unitaries in $X$ and its complement, multiplied by a dynamics strictly supported in the neighbourhood of the surface $\partial X$. The error decays almost exponentially in the size of the neighbourhood and grows with the square of the area~$\vert \partial X\vert^2$.
\end{abstract}

\maketitle

\date{\today }

\section{Introduction}

The general locality result for the dynamics of quantum lattice systems proved in the seminal work of Lieb and Robinson~\cite{Lieb:1972aa} has been refined and extended in many directions, and it has found countless applications. The Lieb-Robinson bound is best understood as expressing the linear growth of the support of local observables, and the Lieb-Robinson velocity is therefore referred to as the `speed of sound' of the quantum lattice system. In this letter, we use it to prove a strong version of locality, namely directly at the level of the unitary cocycle implementing the dynamics in any finite volume, the result being uniform in the volume. The Hamiltonian generating the dynamics can be time-dependent. Indeed, the result applies not only to the actual time evolution of the system, but, in fact more importantly, to the spectral flow studied in~\cite{Bachmann:2012aa}, and used in~\cite{Michalakis:2012wq}. The latter article motivates the present work, which provides a simple proof of Lemma~1 therein.

Concretely, let $\Lambda$ be a finite subset of the lattice and $X\subset\Lambda$. Let $U_\Lambda^{t,s}$ be the unitary evolution generated by the family $H_\Lambda(t)$. We prove that $U_\Lambda^{t,s}$ can be approximated in norm by the product of two unitaries: firstly the purely factorised dynamics $U_X^{t,s}\otimes U_{\Lambda\setminus X}^{t,s}$, and secondly a `surface operator' $\tilde U_{\partial_{R} X}(t,s)\str$ supported on the $R$-fattening of the surface $\partial X$ of $X$. The error decays faster than any inverse power in $R$ and grows like $\vert \partial X \vert^2$. Beyond its structural interest, the result and its particular dependence on the surface of $X$ plays an important role in the understanding of the stability of area laws for the entanglement entropy, see again~\cite{Michalakis:2012wq} as well as~\cite{Marien:2014ub}. The factorisation presented here, which holds in any dimension, should also be put in parallel with the similar factorisation of the ground state projections proved in~\cite{Hastings:2007bu, Hamza:2009di}. We note in particular that the error there grows similarly with the square of the surface of $X$.


\section{Setting}
We consider a quantum spin system defined on a countable set $\Gamma$, which is equipped with its graph distance $d(x,y)\geq 0$ for any $x,y\in \Gamma$. We assume that there is a $F:[0,\infty)\to(0,\infty)$, satisfying the two conditions
\begin{align}
\Vert F\Vert &:= \sup_{x\in\Gamma}\sum_{y\in\Gamma}F(d(x,y))<\infty \label{F_int},\\
C_F&:= \sup_{x,z\in\Gamma}\sum_{y\in\Gamma}\frac{F(d(x,y))F(d(y,z))}{F(d(x,z))} <\infty. \label{F_Conv}
\end{align}
Let $\xi:[0,\infty)\to(0,\infty)$ be a non-increasing function that is logarithmically superadditive, namely
\begin{equation}\label{supadditivity}
\xi(a+b) \geq \xi(a)\xi(b).
\end{equation}
Then the function $F_\xi(r) := F(r)\xi(r)$ satisfies~(\ref{F_Conv}) again, as
\begin{equation*}
\frac{\xi(d(x,y))\xi(d(y,z))}{\xi(d(x,z))} \leq \frac{\xi(d(x,y)\xi(d(y,z)}{\xi(d(x,y) + d(y,z))} \leq 1,
\end{equation*}
for a constant $C_\xi\leq C_F$. We assume moreover that
\begin{equation*}
\lim_{r\to\infty}\xi(r) r^n\to 0
\end{equation*}
for all $n\in\bbN$, in which case $F_\xi$ satisfies~(\ref{F_int}). Furthermore, the function
\begin{equation*}
\zeta(R) := \sum_{r\geq R+1}\xi(r),\qquad R\in\bbN\cup\{0\},
\end{equation*}
is strictly decreasing and decays faster than any inverse power. Indeed, for any $n\in\bbN$,
\begin{equation*}
\lim_{R\to\infty}\zeta(R) R^n\leq \lim_{R\to\infty}\Big(\sup_{r\geq R+1}\xi(r) r^{n+2} \Big)\sum_{r\geq R+1}\frac{(r-1)^n}{r^{n+2}} = 0.
\end{equation*}

To any finite $\Lambda\subset\Gamma$, we associate the finite dimensional Hilbert space $\caH_\Lambda = \otimes_{x\in\Lambda} \caH_x$, where $\mathrm{dim}\caH_x < \infty$. We denote $\caA_\Lambda = \caB(\caH_{\Lambda})$ the set of observables supported in $\Lambda$. It is canonically embedded in $\caA_{\Lambda'}, \Lambda'\supset\Lambda$ by tensoring with the identity.

Let $T\in(0,\infty)$ and $-T\leq s, t\leq T$. The Hamiltonian $H_\Lambda(t)\in\caA_\Lambda$ is a sum of time-dependent local interactions,
\begin{equation*}
H_\Lambda(t) = \sum_{Z\subset\Lambda}\Psi(Z,t),
\end{equation*}
with $\Psi(Z,t) = \Psi(Z,t)\str \in\caA_Z$. We assume that the map $[-T,T]\ni t\mapsto \Psi(Z,t)$ is continuous for all finite $Z$. Furthermore, we assume that
\begin{equation*}
\Vert \Psi \Vert_\xi:= \sup_{x,y\in\Gamma} \frac{1}{F_\xi(d(x,y))} \sum_{Z\ni x,y}\sup_{t\in[-T,T]} \Vert \Psi(Z,t) \Vert<\infty.
\end{equation*}
The self-adjoint $H_\Lambda(t)$ generates the unitary cocycle $U_\Lambda(t,s)\in\caA_\Lambda$ solution of
\begin{equation*}
\iu \frac{\dif}{\dif t}U_\Lambda(t,s) = H_\Lambda (t) U_\Lambda(t,s), \qquad U_\Lambda(s,s) = 1.
\end{equation*}
The associated Heisenberg evolution
\begin{equation*}
\tau_\Lambda^{t,s}(A):= U_\Lambda(t,s)\str A U_\Lambda(t,s),
\end{equation*}
for any $A\in\caA_{\Lambda}$, is such that $\frac{\dif}{\dif t}\tau_\Lambda^{t,s}(A) = \tau_\Lambda^{t,s}(\iu [H_\Lambda (t) , A])$ and that
\begin{equation}\label{cocylce}
\tau_\Lambda^{t,s}\circ\tau_\Lambda^{s,r}(A) = \tau_\Lambda^{t,r}(A).
\end{equation}
The dynamics satisfies a Lieb-Robinson bound of the form
\begin{equation}\label{LRBound}
\left\Vert \left[ \tau_\Lambda^{t,s}(A), B\right] \right\Vert\leq \frac{2\Vert A\Vert \Vert B\Vert}{C_\xi} \min\big[1, g_\xi(t-s)\sum_{x\in X}\sum_{y\in Y}F_\xi(d(x,y))\big]
\end{equation}
where $A \in \caA_X$, $B \in \caA_Y$ and 
\begin{equation*}
g_\xi(r) = \begin{cases}
\exp(v_\xi \vert {r}\vert ) - 1 & \text{if }d(X,Y)>0 \\ \exp(v_\xi \vert {r}\vert ) &\text{otherwise}
\end{cases}
\end{equation*}
see~\cite{Nachtergaele:2006bh}, and~\cite{Bachmann:2012aa} for the time-dependent case. The positive constant $v_\xi = 2\Vert\Psi\Vert_\xi C_\xi $ is the Lieb-Robinson velocity.

Let $X\subset\Gamma$ be a finite subset of $\Gamma$. The inner boundary of $X$ is defined by
\begin{equation*}
\partial X :=  \{x\in X: d(x,\Gamma\setminus X)= 1\}
\end{equation*}
For any $R\in\bbN\cup\{0\}$, consider the sets
\begin{align*}
\overline{X}^R &:= \{x\in\Gamma: d(x,X)\leq R\} \\
\underline{X}_R &:= \{x\in\Gamma: d(x,\Gamma\setminus X)\leq R\} \\
\partial_R X &:= \overline{X}^R \cap \underline{X}_R,
\end{align*}
and note that $\partial X = \underline{X}_1\setminus \underline{X}_0$.

We shall finally assume that there is $G>0$ such that for any finite $X\subset\Gamma$ and all $R$,
\begin{equation}\label{fat boundary}
\vert \partial_R X \vert \leq G R^d \vert \partial X\vert,
\end{equation}
where $\vert\cdot\vert$ denotes the volume of the set, \ie its cardinality.

\section{The factorisation}

In this section, we consider $X$ a fixed finite set such that the map
\begin{equation*}
\bbN\ni n\mapsto \vert \underline{X}_{n}\setminus\underline{X}_{n-1} \vert\geq 0
\end{equation*}
is non-increasing. For any $R\in \bbN\cup\{0\}$ and $\Lambda$ any finite set such that $\overline{X}^{R}\subset\Lambda$, let
\begin{equation*}
M_R := \{Y\subset\partial_{\lfloor R/2 \rfloor}X:  Y\cap X\neq\emptyset \text{ and } Y\cap \Lambda\setminus X\neq\emptyset\},
\end{equation*}
and we shall write $R/2$ instead of $\lfloor R/2 \rfloor$ for simplicity in the following. With this, we define a surface energy by
\begin{equation*}
S(R,t) := \sum_{Z\in M_R} \Psi(Z,t).
\end{equation*}
Note that
\begin{align}
\sup_{t\in[-T,T]}\left\Vert S(R,t) \right\Vert &\leq \sum_{x,y\in \partial_{R/2}X}\sum_{Z\in M_R: x,y\in Z}\sup_{t\in[-T,T]}\Vert \Psi(Z,t)\Vert \leq \vert \partial_{R/2} X\vert \Vert F_\xi\Vert \Vert \Psi\Vert_\xi  \label{SNorm} \\
&\leq G\Vert F_\xi\Vert\Vert \Psi\Vert_\xi (R/2)^d \vert \partial X\vert, \nonumber
\end{align}
by~(\ref{fat boundary}). Finally, let $\tilde U_{\partial_{R} X}(t,s)\in \caA_{\partial_{R} X}$ be the differentiable unitary cocycle defined as the solution of
\begin{equation*}
- \iu \frac{\dif}{\dif t} \tilde U_{\partial_{R} X}(t,s) = \tau_{\partial_{R} X}^{t,s}(S(R,t)) \tilde U_{\partial_{R} X}(t,s),\qquad \tilde U_{\partial_{R} X}(s,s) = 1.
\end{equation*}
\begin{thm}\label{THM}
For a finite $X\subset\Gamma$ as above, $R\in\bbN$, and any finite $\Lambda\subset\Gamma$ such that $\overline{X}^{R}\subset\Lambda$,
\begin{multline*}
\left\Vert U_\Lambda(t,s) - \left( U_X(t,s)\otimes U_{\Lambda\setminus X}(t,s) \right) \tilde U_{\partial_{R} X}(t,s)\str \right\Vert \\
\leq \vert\partial X\vert \Vert \Psi \Vert_\xi \Vert F\Vert \vert t-s\vert  \left[ 2\zeta(R/2) 
 +  \vert \partial X\vert \kappa\, (R/2)^{2d}\xi\left(R/2\right) (\ep{v_\xi \vert t-s\vert}-1) \right],
\end{multline*}
where $\kappa= \frac{G^2 \Vert F_\xi\Vert}{C_\xi}$.
\end{thm}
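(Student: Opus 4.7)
The plan is to define $V(t,s) := \bigl(U_X(t,s)\otimes U_{\Lambda\setminus X}(t,s)\bigr)\tilde U_{\partial_R X}(t,s)^*$ and compare it to $U_\Lambda(t,s)$ through the standard Duhamel identity for cocycles. Writing $W(t,s) := U_X(t,s)\otimes U_{\Lambda\setminus X}(t,s)$, which is generated by $H_X(t) + H_{\Lambda\setminus X}(t)$, I would differentiate the product $U_\Lambda\tilde U_{\partial_R X}$ using the defining ODE of $\tilde U_{\partial_R X}$ (whose minus sign is essential) to obtain
\[
i\partial_t\bigl(U_\Lambda\tilde U_{\partial_R X}\bigr) = \bigl(H_\Lambda(t) - U_\Lambda(t,s)\,\tau^{t,s}_{\partial_R X}(S(R,t))\,U_\Lambda(t,s)^*\bigr)\,U_\Lambda\tilde U_{\partial_R X}.
\]
Since right-multiplication by $\tilde U_{\partial_R X}^*$ is an isometry and both $U_\Lambda\tilde U_{\partial_R X}$ and $W$ are cocycles agreeing at $t=s$, Duhamel bounds $\|U_\Lambda - V\| = \|U_\Lambda\tilde U_{\partial_R X} - W\|$ by the time-integral of the difference of generators.

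Exploiting the identity $U_\Lambda\,\tau^{r,s}_\Lambda(A)\,U_\Lambda^* = A$, that generator difference equals
\[
D(r) = B'(r) + U_\Lambda(r,s)\bigl[\tau^{r,s}_\Lambda(S(R,r)) - \tau^{r,s}_{\partial_R X}(S(R,r))\bigr]U_\Lambda(r,s)^*,
\]
where $B'(r) := H_\Lambda(r) - H_X(r) - H_{\Lambda\setminus X}(r) - S(R,r)$ collects precisely the straddling interactions whose support exits $\partial_{R/2}X$. Because conjugation by $U_\Lambda$ is an isometry, $\|D(r)\| \leq \|B'(r)\| + \|\tau^{r,s}_\Lambda(S(R,r))-\tau^{r,s}_{\partial_R X}(S(R,r))\|$ and the two pieces are treated independently.

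To bound $\|B'(r)\|$, each contributing $Z$ admits a witness vertex $z\in Z$ with either $z\in X$ and $d(z,\Gamma\setminus X)>R/2$, or $z\in\Lambda\setminus X$ and $d(z,X)>R/2$. In the first case, pairing $z$ with any $y \in Z\cap(\Lambda\setminus X)$ and invoking $\sum_{Z\ni z,y}\|\Psi(Z,r)\| \leq F_\xi(d(z,y))\|\Psi\|_\xi$ with $d(z,y)\geq d(z,\Gamma\setminus X)$ and $\xi$ non-increasing bounds each term by $\xi(d(z,\Gamma\setminus X))\|F\|\|\Psi\|_\xi$; summing $z$ across the interior shells of $X$ and using the non-increasing shell hypothesis $|\{z\in X:\,d(z,\Gamma\setminus X)=k\}|\leq|\partial X|$ telescopes the sum into $|\partial X|\,\|F\|\,\|\Psi\|_\xi\,\zeta(R/2)$. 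A symmetric treatment, using the fat-boundary bound (\ref{fat boundary}) to control the outer shells, handles the second case, yielding $\|B'(r)\| \leq 2|\partial X|\|\Psi\|_\xi\|F\|\zeta(R/2)$ and hence after integration the first summand of the theorem. For the Lieb--Robinson correction I would differentiate $r\mapsto\tau^{r,s}_{\partial_R X}\circ\tau^{t,r}_\Lambda(S(R,t))$ to obtain
\[
\tau^{t,s}_\Lambda(S(R,t)) - \tau^{t,s}_{\partial_R X}(S(R,t)) = -i\int_s^t \tau^{r,s}_{\partial_R X}\bigl[H_\Lambda(r)-H_{\partial_R X}(r),\,\tau^{t,r}_\Lambda(S(R,t))\bigr]\,dr,
\]
and estimate each commutator with $\Psi(Z,r)$, $Z\not\subset\partial_R X$, via (\ref{LRBound}). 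The key geometric input is that every vertex of $\partial_{R/2}X$ is at distance $>R/2$ from every vertex outside $\partial_R X$, so after exchanging sums and using the convolution property of $F_\xi$ one extracts a factor $\xi(R/2)\|F\|$; combining with (\ref{SNorm}) and $|\partial_{R/2}X|\leq G(R/2)^d|\partial X|$ produces the $(R/2)^{2d}|\partial X|^2$ prefactor, while the double time-integral of $g_\xi(r-u) = e^{v_\xi(r-u)}-1$ gives the $|t-s|(e^{v_\xi|t-s|}-1)/v_\xi$ dependence, which upon inserting $v_\xi = 2\|\Psi\|_\xi C_\xi$ rearranges into the second summand with $\kappa = G^2\|F_\xi\|/C_\xi$.

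The delicate step is the $\|B'(r)\|$ estimate: a naive enumeration of straddling interactions would produce an $|X|$-scaling rather than $|\partial X|$, and the non-increasing shell hypothesis (together with the fat-boundary bound) is precisely the geometric input that trades volume for boundary through the telescoping identity $\sum_{k>R/2}\xi(k)|\{z:d(z,\Gamma\setminus X)=k\}| \leq |\partial X|\zeta(R/2)$.
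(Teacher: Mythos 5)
Your overall strategy is the same as the paper's, just with the bookkeeping mirrored: the paper compares $V(t,s)=U_\Lambda(t,s)\str(U_X\otimes U_{\Lambda\setminus X})$ with $\tilde U_{\partial_R X}(t,s)$ and bounds $\Vert \tau_\Lambda^{r,s}(K(r))+\tau_{\partial_R X}^{r,s}(S(R,r))\Vert$, which is exactly your $\Vert D(r)\Vert$ conjugated by $U_\Lambda(r,s)$. Your Duhamel setup, the identification of $B'(r)$ with the straddling interactions not contained in $\partial_{R/2}X$, and the Lieb--Robinson interpolation for $\tau^{r,s}_\Lambda(S)-\tau^{r,s}_{\partial_R X}(S)$ (including the cancellation of $1/v_\xi$ against $v_\xi=2\Vert\Psi\Vert_\xi C_\xi$ and the $|\partial X|^2(R/2)^{2d}$ prefactor from~(\ref{SNorm}) and~(\ref{fat boundary})) all reproduce the paper's Claims 1 and 2.

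There is, however, a genuine gap in your estimate of $\Vert B'(r)\Vert$, in the second case of your witness dichotomy. When the witness $z\in Z$ lies in $\Lambda\setminus\overline{X}^{R/2}$, your ``symmetric treatment'' sums over $z$ in the \emph{outer} shells $\{z:d(z,X)=k\}$, $k>R/2$. The non-increasing-shell hypothesis controls only the inner shells $\underline{X}_n\setminus\underline{X}_{n-1}$; for the outer shells the only available control is~(\ref{fat boundary}), which gives $\vert\{z:d(z,X)=k\}\vert\leq G k^d\vert\partial X\vert$ and hence a bound of the form
\begin{equation*}
G\,\vert\partial X\vert\,\Vert F\Vert\,\Vert\Psi\Vert_\xi\sum_{k\geq \lfloor R/2\rfloor+1}k^d\,\xi(k),
\end{equation*}
which is \emph{not} $\vert\partial X\vert\,\Vert F\Vert\,\Vert\Psi\Vert_\xi\,\zeta(R/2)$; your claimed conclusion $\Vert B'(r)\Vert\leq 2\vert\partial X\vert\Vert\Psi\Vert_\xi\Vert F\Vert\zeta(R/2)$ therefore does not follow from the argument as written (it would yield a weaker constant and an extra polynomial factor in $R$). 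The fix is the paper's: in \emph{both} cases anchor the pair at a point $x\in Z\cap X$, so that the outer sum over $x$ always runs over the inner shells (cardinality $\leq\vert\partial X\vert$), and let $y$ be the far witness when that witness lies outside $\overline{X}^{R/2}$. Since any path from $x$ in the shell $\underline{X}_n\setminus\underline{X}_{n-1}$ to $y\in\Lambda\setminus\overline{X}^{R/2}$ has length at least $n+R/2$, one gets $\xi(d(x,y))\leq\xi(n+R/2)$, the sum over $y$ is absorbed into $\Vert F\Vert$, and $\sum_{n\geq 1}\xi(n+R/2)=\zeta(R/2)$ delivers the stated constant. With that correction your proof matches the theorem.
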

\begin{proof}
By unitarity,
\begin{equation*}
\left\Vert U_\Lambda(t,s) - \left( U_X(t,s)\otimes U_{\Lambda\setminus X}(t,s) \right) \tilde U_{\partial_{R} X}(t,s)\str \right\Vert = \left\Vert V(t,s)\str -  \tilde U_{\partial_{R} X}(t,s)\str \right\Vert,
\end{equation*}
where
\begin{equation*}
 V(t,s) =  U_\Lambda(t,s)\str \left( U_X(t,s)\otimes U_{\Lambda\setminus X}(t,s) \right).
\end{equation*}
We note that $V(s,s) = 1$ and that
\begin{align*}
\iu \frac{\dif}{\dif t} V(t,s) &= -U_\Lambda(t,s)\str H_\Lambda(t)\left( U_X(t,s)\otimes U_{\Lambda\setminus X}(t,s) \right) \\
&\quad + U_\Lambda(t,s)\str (H_X(t) + H_{\Lambda\setminus X}(t))\left( U_X(t,s)\otimes U_{\Lambda\setminus X}(t,s) \right) \\
&= \tau_\Lambda^{t,s}(K(t)) V(t,s)
\end{align*}
where
\begin{equation*}
K(t) = -\sum_{\substack{Z\subset\Lambda:\\ Z\cap X\neq\emptyset \text{ and } Z\cap \Lambda\setminus X\neq\emptyset}} \Psi(Z,t).
\end{equation*}
By the standard estimate
\begin{align*}
\left\Vert V(t,s)\str -  \tilde U_{\partial_{R} X}(t,s)\str \right\Vert &= \left\Vert V(t,s)\str\tilde U_{\partial_{R} X}(t,s) -1 \right\Vert \\
&\leq \int_s^t\left\Vert \frac{\dif}{\dif r} V(r,s)\str \tilde U_{\partial_{R} X}(r,s)\right\Vert \Idif r \\
&\leq \vert t-s\vert \sup_{r\in[t,s]} \left\Vert \tau_\Lambda^{r,s}(K(r)) + \tau_{\partial_{R}X}^{r,s}(S(R,r)) \right\Vert,
\end{align*}
it suffices to control the sum of the generators, uniformly in $r$ and $\Lambda$.

\noindent \emph{Claim 1.} For any $R\in\bbN$:
\begin{equation*}
\sup_{r\in[s,t]} \Vert K(r) + S(R,r)\Vert\leq 2 \vert \partial X\vert  \Vert \Psi \Vert_\xi \Vert F\Vert \zeta (R/2);
\end{equation*}
\noindent \emph{Claim 2.} For any $R\in\bbN$, and for any $A\in\caA_{\partial_{R/2} X}$:
\begin{equation*}
\left\Vert \tau_\Lambda^{r,s}(A) - \tau_{\partial_{R}X}^{r,s}(A) \right\Vert
\leq \frac{G \Vert F\Vert}{C_\xi}  \Vert A\Vert \vert \partial X\vert  \left(R/2\right)^d\xi\left(R/2\right) (\ep{v_\xi \vert r-s\vert}-1).
\end{equation*}

We first prove Claim~1. By definition,
\begin{equation*}
K(r) + S(R,r) = -\sum_{\substack{Z\subset\Lambda: Z\cap\Lambda\setminus \partial_{R/2} X\neq \emptyset, \\ Z\cap X\neq\emptyset, Z\cap \Lambda\setminus X\neq\emptyset}}\Psi(Z,r),
\end{equation*}
the norm of which can be estimated as
\begin{multline*}
\sup_{r\in[s,t]} \Vert K(r) + S(R,r)  \Vert
\leq \sum_{n\in\bbN }\sum_{x\in \underline{X}_{n}\setminus\underline{X}_{n-1}}
\bigg[\sum_{y\in \Lambda\setminus \overline{X}^{R/2}}\sum_{Z\ni x,y}\sup_{r\in[s,t]}  \Vert\Psi(Z,r) \Vert \\   + \chi(n\geq R/2+1)\sum_{y\in \overline{X}^{R/2}\setminus X}\sum_{Z\ni x,y}\sup_{r\in[s,t]} \Vert\Psi(Z,r) \Vert\bigg].
\end{multline*}
Multiplying and dividing by $F_\xi(d(x,z))$, we indeed obtain
\begin{align*}
\sup_{r\in[s,t]} \Vert K(r) + S(R,r)  \Vert&\leq \Vert \Psi \Vert_\xi \Vert F\Vert \sum_{n\in\bbN }\vert \underline{X}_{n}\setminus\underline{X}_{n-1} \vert \left(\xi(n+R/2) + \chi(n\geq R/2+1)\xi(n) \right)  \\
&\leq 2 \vert \partial X\vert  \Vert \Psi \Vert_\xi \Vert F\Vert \zeta (R/2).
\end{align*}
In the last inequality, we used that, by assumption on $X$,
\begin{equation*}
\vert \underline{X}_{n}\setminus\underline{X}_{n-1} \vert \leq \vert \underline{X}_{1}\setminus\underline{X}_{0}\vert = \vert \partial X\vert.
\end{equation*}

We now prove Claim~2, which is a well-known consequence of the Lieb-Robinson bound in the case of a time independent Hamiltonian. Let $A\in\caA_Y$ with $Y\subset\Lambda'\subset\Lambda$. First, we note that
\begin{equation*}
\frac{\dif}{\dif s}\tau_\Lambda^{t,s}(A) = -\iu \left[ H_\Lambda(s), \tau_\Lambda^{t,s}(A)\right].
\end{equation*}
This follows immediately from the cocycle property~(\ref{cocylce}),
\begin{equation*}
0 = \frac{\dif}{\dif s} \tau_\Lambda^{s,t} \circ \tau_\Lambda^{t,s}(A) = \tau_\Lambda^{s,t}\left( \iu[H_\Lambda(s),\tau_\Lambda^{t,s}(A)]\right) + \tau_\Lambda^{s,t} \left(\frac{\dif}{\dif s} \tau_\Lambda^{t,s}(A)\right),
\end{equation*}
and the fact that $\tau_\Lambda^{s,t}$ is an automorphism. With this,
\begin{equation*}
\frac{\dif}{\dif u}\tau_\Lambda^{u,s}\circ\tau_{\Lambda'}^{r,u}(A) = \tau_\Lambda^{u,s}\left(\iu[H_\Lambda(u) - H_{\Lambda'}(u),\tau_{\Lambda'}^{r,u}(A)]\right),
\end{equation*}
and since $\tau_\Lambda^{r,s}(A) - \tau_{\Lambda'}^{r,s}(A) = \int_s^r \frac{\dif}{\dif u}\tau_\Lambda^{u,s}\circ\tau_{\Lambda'}^{r,u}(A) \Idif r$, we obtain
\begin{align*}
\left\Vert \tau_\Lambda^{r,s}(A) - \tau_{\Lambda'}^{r,s}(A)\right\Vert 
&\leq \sum_{z\in\Lambda\setminus\Lambda'} \sum_{Z\ni z}\int_s^r\left\Vert \left [\Psi(Z,u),\tau_{\Lambda'}^{r,u}(A)\right] \right\Vert \Idif u\\
&\leq \frac{2\Vert A\Vert}{C_\xi}\frac{1}{v_\xi}(\ep{v_\xi \vert r-s\vert}-1) \sum_{z\in\Lambda\setminus\Lambda'} \sum_{Z\ni z} \sup_{u\in[s,r]}\Vert \Psi(Z,u)\Vert \sum_{x\in Z}\sum_{y\in Y} F_\xi(d(x,y))
\end{align*}
by the Lieb-Robinson bound~(\ref{LRBound}). In order to get an upper bound, we replace $\sum_{Z\ni z}\sum_{x\in Z}$ by $\sum_{x\in \Lambda}\sum_{Z\ni x,z}$, multiply and divide by $F_\xi(d(x,z))$ and use the convolution condition to carry out the sum over $x$ to finally get
\begin{align*}
\left\Vert \tau_\Lambda^{r,s}(A) - \tau_{\Lambda'}^{r,s}(A)\right\Vert 
&\leq \frac{\Vert A\Vert}{C_\xi} (\ep{v_\xi \vert r-s\vert}-1)  \sum_{z\in\Lambda\setminus\Lambda'}  \sum_{y\in Y} F_\xi(d(y,z)) \\
&\leq \frac{\Vert A\Vert \Vert F\Vert}{C_\xi} \vert Y\vert \, \xi(d(Y,\Lambda\setminus\Lambda')) (\ep{v_\xi \vert r-s\vert}-1).
\end{align*}
Claim~2 follows by setting $Y = \partial_{R/2}X$ and $\Lambda' = \partial_R X$, so that
\begin{equation*}
\vert Y\vert \,\xi(d(Y,\Lambda\setminus\Lambda')) \leq \vert \partial X\vert G (R/2)^{d}\xi(R/2)
\end{equation*}
by~(\ref{fat boundary}).

Finally, since
\begin{equation*}
\Vert \tau_\Lambda^{r,s}(K(r)) + \tau_{\partial_{R}X}^{r,s}(S(R,r))\Vert
 \leq \Vert \tau_\Lambda^{r,s}(K(r) + S(R,r)) \Vert
 + \Vert \tau_{\partial_{R}X}^{r,s}(S(R,r)) - \tau_\Lambda^{r,s}(S(R,r))\Vert ,
\end{equation*}
Claim~1 and Claim~2 jointly with~(\ref{SNorm}) yield the statement.
\end{proof}
One may be interested in a similar factorisation with the pure tensor product on the right of a surface unitary, namely of the form $\hat U_{\partial_{R} X}(t,s)\str \left( U_X(t,s)\otimes U_{\Lambda\setminus X}(t,s) \right)$. The analog of $V(t,s)$ being here $W(t,s)=\left( U_X(t,s)\otimes U_{\Lambda\setminus X}(t,s) \right)U_\Lambda(t,s)\str$, the necessary cancellations of interaction terms occurs only if one considers the adjoint evolution, namely taking the derivative with respect to $s$. We have
\begin{equation*}
-\iu \frac{\dif}{\dif s} W(t,s) = W(t,s)\sigma_{\Lambda}^{t,s}(K(s)),
\end{equation*}
where $\sigma_{\Lambda}^{t,s}(A) = U_\Lambda(t,s) A U_\Lambda(t,s)\str$. In that case, let $\hat U_{\partial_{R} X}(t,s)\str$ be the solution of
\begin{equation*}
-\iu \frac{\dif}{\dif s} \hat U_{\partial_{R} X}(t,s)\str = \sigma_{\partial_{R} X}^{t,s}(S(R,s)) \hat U_{\partial_{R} X}(t,s)\str,\qquad \hat U_{\partial_{R} X}(t,t)\str = 1,
\end{equation*}
so that
\begin{align}
\label{left}
\left\Vert U_\Lambda(t,s) - \hat U_{\partial_{R} X}(t,s)\str \left( U_X(t,s)\otimes U_{\Lambda\setminus X}(t,s) \right) \right\Vert 
&= \left\Vert W(t,s)\str - \hat U_{\partial_{R} X}(t,s)\str \right\Vert \\
&\leq \int_s^t \left\Vert \sigma_\Lambda^{t,r}(K(r)) + \sigma_{\partial_R X}^{t,r}(S(R,r)) \right\Vert \Idif r.\nonumber
\end{align}
From there, the proof runs along similar lines as above. Claim~1 remains unchanged.  As for Claim~2, we note that $\sigma_\Lambda^{r,s}(A) - \sigma_{\Lambda'}^{r,s}(A) = -\int_s^r\frac{\dif}{\dif u} \sigma_\Lambda^{r,u} \circ \sigma_{\Lambda'}^{u,s}(A)\Idif u$ so that
\begin{equation*}
\left \Vert \sigma_\Lambda^{r,s}(A) - \sigma_{\Lambda'}^{r,s}(A)\right\Vert \leq \sum_{z\in\Lambda\setminus\Lambda'} \sum_{Z\ni z}\int_s^r\left\Vert \left [\Psi(Z,u),\sigma_{\Lambda'}^{u,s}(A)\right] \right\Vert \Idif u.
\end{equation*}
It remains to note that $\Vert [\Psi(Z,u), \sigma_{\Lambda'}^{u,s}(A)]\Vert = \Vert [\tau_{\Lambda'}^{u,s}(\Psi(Z,u)), A]\Vert$ and observe that the Lieb-Robinson bound~(\ref{LRBound}) is symmetric in its arguments to conclude that the upper bound of Theorem~\ref{THM} holds for (\ref{left}) as well.

\end{document}